\documentclass[runningheads,a4paper]{llncs}
\usepackage[margin=1.33in]{geometry}
\usepackage{amssymb}
\setcounter{tocdepth}{3}
\usepackage{graphicx}
\usepackage{verbatim}
\usepackage{mathrsfs}
\usepackage[ruled,vlined,lined,commentsnumbered,linesnumbered]{algorithm2e}
\usepackage{amsfonts}

\usepackage{amsthm}
\usepackage{ifpdf}
\usepackage{amsmath}
\usepackage[noend]{algpseudocode}
\usepackage{enumerate}
\usepackage{url}
\usepackage{color,soul}
\usepackage{hyperref}
\usepackage{cite}

\urldef{\mailsa}\path|{ibanerje, richards}@cs.gmu.edu|    
\newcommand{\keywords}[1]{\par\addvspace\baselineskip
\noindent\keywordname\enspace\ignorespaces#1}

\begin{document}

\mainmatter  

\title{Routing and Sorting Via Matchings On Graphs}

\titlerunning{Routing and Sorting Via Matchings On Graphs}

%
%
\author{Indranil Banerjee, Dana Richards}
\authorrunning{I  Banerjee, D Richards}

\institute{George Mason University\\ Department Of Computer Science\\ Fairfax Virginia 22030, USA\\
\mailsa}

%
%

\toctitle{Lecture Notes in Computer Science}
\tocauthor{Authors' Instructions}
\maketitle

\begin{abstract}
The paper is divided in to two parts. In the first part we present some new results for the \textit{routing via matching} model introduced by Alon et al\cite{5}. This model can be viewed as a communication scheme on a distributed network. The nodes in the network can communicate via matchings (a step), where a node exchanges data with its partner. Formally, given a connected graph $G$ with vertices labeled from $[1,...,n]$ and a permutation $\pi$ giving the destination of pebbles on the vertices the problem is to find a minimum step routing scheme. This is denoted as the routing time $rt(G,\pi)$ of $G$ given $\pi$.  In this paper we present the following new results, which answer one of the open problems posed in \cite{5}: 1) Determining whether $rt(G,\pi)$ is $\le 2$ can be done in $O(n^{2.5})$ deterministic time for any arbitrary connected graph $G$. 2) Determining whether $rt(G,\pi)$ is $\le k$ for any $k \ge 3$ is NP-Complete. In the second part we study a related property of graphs, which measures how easy it is to design sorting networks using only the edges of a given graph. Informally, \textit{sorting number} of a graph is the minimum depth sorting network that only uses edges of the graph. Many of the classical results on sorting networks can be represented in this framework. We show that a tree with maximum degree $\Delta$ can accommodate a $O(\min(n\Delta^2,n^2))$ depth sorting network. Additionally, we give two instance of trees for which this bound is tight.  

\keywords{Routing, NP Completeness, Sorting Networks}
\end{abstract}

\section{Permutation Routing via Matchings}
Originally introduced by Alon and others \cite{5} this problems explores permutation routing on graphs where routing is achieve through a  series of matchings called steps. Let $G$ be an undirected labeled graph with vertex labeled $i$ having a pebble labeled $p_i$ initially. A permutation $\pi$ gives the destinations of each pebble. The task is to route each pebble to their destination via a sequence of matchings. Given a matching we swap the pebbles on pairs of matched vertices. The \textit{routing time} $rt(G,\pi)$ is defined as the minimum number of steps necessary to route all the pebbles for a given permutation. For given graph $G$, the maximum routing time over all permutations is called  the routing number $rt(G)$ of $G$. Since their inception, permutation routing via matching have generated continual interest. However, prevailing literature focuses on determining the routing numbers of special graphs. We shall give a very brief survey about them in the next section. In this paper we shall focus on the computational aspect of the problem. In particular  we show that for a general graph determining whether $rt(G,\pi)$ is $\le k$ is $\mathsf{NP}$ complete for $k \ge 3$. However, we show that it is possible to determine if $rt(G,\pi) \le 2$ in polynomial time by determining whether a certain graph has a perfect matching. It remains open whether computing $rt(G,\pi)$ is constant factor $\mathsf{APX}$-$\mathsf{Hard}$.

\subsection{Introduction And Prior Results}
The routing via matching model has several variants and generalizations \cite{5,6,8}. For example a popular network routing model is the direct path routing model. In this model a packet move towards its destination directly and no two packets uses the same links (edges). In one version of the problem a path may be specified for each vertex. Costas and others \cite{8} show that this problem and some variants of it to be $\mathsf{NP}$ complete.  In this paper we only consider the classical model as described in the previous section.  In the introductory paper Alon and others \cite{5}  show that for any connected graph $G$, $rt(G) \le 3n$. This was shown by considering a spanning tree of $G$ and using only the edges of the spanner to route the permutation in $G$. Note that one can always route a permutation on a tree, by iteratively moving a pebble that belong to some leaf node and ignoring the node afterwards. The routing scheme is recursive and uses a well known property of trees: a tree has a vertex whose removal results in a forest of trees with size at most $n/2$.
Later in \cite{7} Zhang improve this upper bound $3n/2 + O(\log n)$. This was done using a new decomposition called the caterpillar decomposition. This bound is essentially tight as it takes $\lfloor{3(n-1)/2}\rfloor$ steps to route a permutation on a star $K_{1,n-1}$. There are  few known results for routing numbers of graphs besides trees. We know that for the complete graph and the complete bipartite graph the routing number is 2 and 4 respectively\cite{5}. Later Li and others \cite{6} extends these results to show $rt(K_{s,t}) = \lfloor 3s/2t \rfloor + O(1)$ ($s \ge t$). For the $n$-cube $Q_n$ we know that $n+1 \le rt(Q_n) \le 2n-2$. The lower bound is quite straightforward. The upper bound was discovered using the results for determinig the routing number of the Cartesian product of two graphs \cite{5}. If $G = G_1  \times  G_2$ be the Cartesian product of $G_1$ and $G_2$ then: $$rt(G) \le 2 \min(rt(G_1),rt(G_2))+\max(rt(G_1),rt(G_2))$$
Since $Q_n = K_2 \times Q_{n-1}$ the result follow\footnote{The base case, which computes $rt(Q_3)$ was determined to be 4 via a computer search\cite{6}}. 

Here we take a detour to discuss a related problem of determining the \textit{acquaintance time} of a connected graph. Given a connected graph $G$ whose vertices contains pebbles, its acquaintance time $ac(G)$ is defined to be the minimum number of matching necessary for each pebble to be acquainted with  each other. We say two pebbles are acquainted if they happen to be on adjacent vertices. Hence the acquaintance time of a complete graph is 0. This notion  of acquaintance  was introduce by Benjamini and others in a recent paper\cite{9}. They show that routing number and acquaintance time of a graph are distinct parameters by giving a separation result for the complete bipartite graph. They show $ac(K_{n,n}) = \log n$, which stands in  contrast to the routing number of 4 for $K_{n,n}$. We believe that further investigation is necessary to study graphs which have large separation between the two parameters.

\subsection{Determining whether $rt(G,\pi) \le 2$ Is Easy}
In this section we present a polynomial time deterministic algorithm to determine given a graph if a permutation can be routed in less than two steps. Determining whether $rt(G,\pi) = 1$ is trivial hence we consider only the case when $rt(G,\pi) > 1$.  The basic idea centers around whether we can route the individual cycles of the permutation within 2 steps. Let $\pi = C_1C_2\ldots C_k$ be a permutation with $k$ cycles and $C_i = c_{i_1}\ldots c_{i_j}$. We say a cycle $C$ is  \textit{individually routable } if it can be routed using only edges of the induced subgraph $G[C]$. A pair of cycles $C_1,C_2$ are \textit{mutually routable} if all the pebbles withing them can be routed using only the edges between the two subsets $C_1$ and $C_2$. The next lemma shows that we cannot route  a pair of cycles using edges between the components as well as within the components.

\begin{lemma}
	If a pair of cycles are mutually routable in 2-steps then they must be of same length.
\end{lemma}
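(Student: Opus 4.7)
Assume both cycles have length at least $2$, so every pebble in $C_1 \cup C_2$ has a destination different from its current location. The plan is to track, for each pebble, how many times it can cross between the sides $C_1$ and $C_2$, given that the only edges available are the bipartite edges joining $C_1$ to $C_2$.

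First I would observe a parity constraint: since each cycle of $\pi$ is closed, every pebble starting in $C_1$ must finish in $C_1$, and similarly for $C_2$. Because each matching step uses only bipartite edges, any pebble that is matched in a step crosses from its current side to the opposite side, and any pebble that is unmatched stays put. Over two steps, a pebble makes either zero or two crossings (since it ends on its starting side).

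Next I would rule out the ``zero crossings'' case for any pebble in a non-trivial cycle. A pebble with zero crossings never moves, but its destination is not equal to its source (the cycle has length at least $2$), contradiction. Therefore every vertex of $C_1 \cup C_2$ is matched in step $1$ (and again in step $2$), and in particular the step-$1$ matching $M_1$ saturates every vertex of $C_1$ and of $C_2$. Since $M_1$ uses only bipartite edges between $C_1$ and $C_2$, it is a perfect bipartite matching, which forces $|C_1| = |C_2|$.

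The main obstacle, if any, is being careful about corner cases. If a cycle has length $1$ (a fixed point), its pebble must remain in place, so it contributes no matching edges across the cut in either step; in that degenerate case the pair is not truly ``mutually routable'' in the intended sense, and the claim either holds vacuously or requires the mild assumption that cycles are non-trivial, which I would state explicitly at the start of the proof. Beyond that, the argument is just the parity observation plus the perfect-matching count, so no delicate estimates are required.
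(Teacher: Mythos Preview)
Your argument is correct. The parity observation---that with only bipartite edges available, each pebble crosses sides an even number of times and hence, in two steps, either zero or two times---immediately forces every vertex to be saturated by the first matching, and a perfect bipartite matching gives $|C_1|=|C_2|$.

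This is a genuinely different route from the paper. The paper argues by case analysis on how an arbitrary vertex $c_{1_i}$ is matched in the first step (unmatched, matched inside $C_1$, or matched across to $C_2$), and in each case shows that the entire two-step scheme is forced; the cross case is then shown, via a picture, to close up only when $a=b$. Your argument is shorter and more conceptual: it extracts only the combinatorial fact needed for the lemma (a saturating bipartite matching in step~1) and ignores the finer structure. What the paper's longer analysis buys is exactly that finer structure---it identifies the full list of admissible 2-step schemes and in particular shows that no edge can lie in two distinct routings, a fact the paper uses downstream when counting routings for the $O(m)$-time verification. Your proof establishes the lemma cleanly but would not by itself supply that rigidity statement. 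The caveat you flag about length-$1$ cycles is appropriate; the paper implicitly assumes non-trivial cycles as well.
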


\begin{figure}[h]
	\includegraphics[width=4cm]{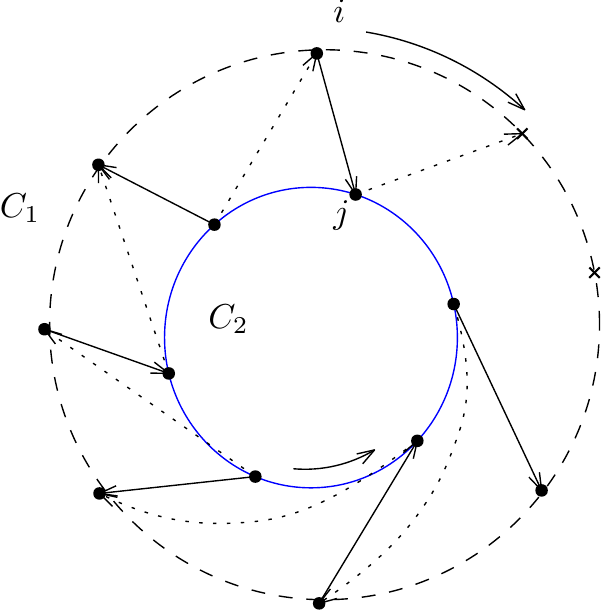}
	\centering
	\caption{The two permutations are shown as concentric circles. The direction of rotation for the outer circle is clockwise and the inner circle is anti-clockwise. Once, we choose $(i,j)$ as the first matched pair, the rest of the matching is forced for both the stages. The crossed vertices in the figure will not be routed.} 
\end{figure}

\begin{proof}
	We prove this assuming $G$ is a complete graph. Since for any other case the induced subgraph $G[C_1\cup C_2]$ would have fewer of edges, hence this is a stronger claim. Let $|C_1| = a \neq  b = |C_2|$. Consider the cycle $C_1=(c_{1_1},\ldots,c_{1_i},\ldots,c_{1_a})$. At the first step we have only three choices for matching some vertex with $c_{1_i}$. 
	\begin{description}
		\item[case 1] \textit{$c_{1_i}$ is not matched}. In this case, in the next(last) round $c_{1_i}$ must be matched with $c_{1_{i+1}}$. This implies $c_{1_{i-1}}$ must be at $c_{1_{i+1}}$ after the first round. This would force $c_{1_{i-2}}$ to be matched with $c_{1_{i+2}}$ in the first round, otherwise $c_{1_{i-2}}$ will not be able to reach $c_{1_{i-1}}$ in two rounds. Proceeding in this way we see that the matching for all the vertices are fixed once we decide not to match $c_{1_i}$. This implies $C_1$ and $C_2$ cannot be routed mutually if we choose to omit any vertex from the matching in $C_1$ in the first round.
		\item[case 2] \textit{$c_{1_i}$ is matched with $c_{1_j}$}. In  this case also we can show that the entire matching is forced. 
		\item[case 3] \textit{$c_{1_i}$ is matched with $c_{2_j}$}. From Figure 1 we see that unless $a  = b$,  the pair $C_1$ and $C_2$ are not mutually routable in 2 steps.\qed
	\end{description}
\end{proof}

\begin{corollary}
	Three or more cycles are not mutually routable in 2 steps. 
\end{corollary}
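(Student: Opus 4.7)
The plan is to examine the joint structure of the step-$1$ and step-$2$ matchings, call them $M_1$ and $M_2$, in any putative two-step mutual routing of three cycles, and to show that the permutation cycles must in fact pair up through this structure. Since three cannot be partitioned into pairs, the corollary follows. I will write $H = M_1 \cup M_2$, viewed as a multigraph on the vertices of $C_1 \cup C_2 \cup C_3$.

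First I would verify that every vertex lying in a nontrivial $\pi$-cycle is matched in both $M_1$ and $M_2$. If such a $v$ were matched in $M_1$ with some $u$ but left out of $M_2$, the pebble finally resting at $v$ would be the one that step $1$ delivered, forcing $u = \pi^{-1}(v)$; but $\pi^{-1}(v)$ lies in the same $\pi$-cycle as $v$, contradicting the assumption that $M_1$ uses only inter-cycle edges. The symmetric argument rules out being matched only in $M_2$, and matched in neither would force $v$ to be a fixed point. Hence $H$ (restricted to non-fixed vertices) has every vertex of degree $2$ and is a disjoint union of even cycles whose edges strictly alternate between $M_1$ and $M_2$.

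The key step is to pin down what one alternating cycle of $H$ contributes to $\pi$. Given an alternating cycle $v_0, v_1, \ldots, v_{2k-1}$ with $M_1$-edges $v_{2i}v_{2i+1}$ and $M_2$-edges $v_{2i+1}v_{2i+2}$ (indices mod $2k$), direct pebble tracing gives $\pi(v_{2i}) = v_{2i+2}$ and $\pi(v_{2i+1}) = v_{2i-1}$. Thus the even- and odd-indexed vertices each form a $\pi$-cycle of length $k$; the inter-cycle constraint (every alternating-cycle edge joins an even to an odd index) forces these two $\pi$-cycles to be distinct, and since any $\pi$-cycle is determined by its vertex set, each $\pi$-cycle is confined to exactly one alternating component of $H$.

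The conclusion then drops out: the number of $\pi$-cycles participating in a two-step mutual routing equals twice the number of alternating components of $H$, hence must be even. Three cycles therefore cannot be mutually routed in $2$ steps. The main obstacle will be the pebble-tracing bookkeeping that nails down the two $\pi$-cycles hidden inside each alternating component; once that is in place, the parity argument is essentially one line, and it also explains why mutual routability is intrinsically a pairwise phenomenon.
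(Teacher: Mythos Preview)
Your argument is correct, and it is genuinely different from what the paper does. The paper does not give an explicit proof of the corollary at all: it states it immediately after Lemma~1, relying implicitly on the case analysis there, whose real content is that once a single inter-cycle pair $(c_{1_i},c_{2_j})$ is fixed in the first matching, the entire two-step scheme is forced and exhausts every vertex of $C_1$ and $C_2$, leaving no room for a third cycle to interact with either. Your route avoids that forcing argument entirely. By looking at $H=M_1\cup M_2$ as a $2$-regular multigraph on the non-fixed vertices and tracing pebbles around each alternating component, you show directly that every component contributes exactly two $\pi$-cycles of equal length, so the participating cycles pair off. This is cleaner and more structural: it simultaneously recovers Lemma~1 (the equal-length condition is the statement that the even- and odd-indexed halves of a $2k$-cycle in $H$ both have size $k$) and makes transparent why the graph $G_{\mathrm{cycle}(\pi)}$ needs only edges and loops rather than hyperedges. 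One small point worth making explicit in your write-up: the corollary says ``three \emph{or more}'', and your parity conclusion literally rules out only odd counts; you should state that for even $m\ge 4$ your decomposition still shows the routing splits into disjoint pairs, so no set of $\ge 3$ cycles is \emph{irreducibly} mutually routable, which is the sense the paper needs for the perfect-matching characterisation.
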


\noindent Naively verifying whether a cycle $C_i$ are individually or a pair $(C_i,C_j)$ is mutually routable takes $O(|C_i|^2)$ and $O((|C_i| + |C_j|)^2)$ respectively. However, we can employ a more sophisticated approach that takes time proportional to the number of edges in the induced subgraphs corresponding to the cycles. First consider the case of verifying whether a cycle $C=(1,\ldots,i,\ldots,j,\ldots,|C|)$ is individually routable. If $G[C]$ is clique then there are $|C|+1$ different routing schemes, one for each choices of how we match the $i^{th}$ vertex in the first round. For example, if we match $(i,j)$ ($i < j$) in the first then the routing scheme is the following sequence  $(S_1,S_2)$ of matchings (slightly modified from \cite{5}): 

\begin{align}
S_1 = (i,j)(i+1,j-1)\ldots(i-1,j+1)\ldots(i - r \mod n, j + r \mod n)\\
S_2 = (i+1,j)(i+2,j-1)\ldots(i,j+1)\ldots(i - r + 1 \mod n, j + r \mod n)
\end{align} 

\noindent Where $r = \lfloor (n-j+i-1)/2 \rfloor$. Note for every $i$, $S_1(S_2(i)) = i+1 \mod n$. Also, it follows from Lemma 1 that no edge can be in two different routings. Let us label these $|C|$ different routing schemes from 1 to $|C|$. Next we scan through the edges of $G[C]$ maintaining a array of counters of size $|C|$ whose $i^{th}$ element counts the number edges we have seen so far that belongs to the $i^{th}$ routing scheme. After iterating over all the edges if some counter $i$ has a value $|C|$ or $|C|-1$ depending on whether $|C|$ is even or odd, respectively, then we know $C$ can be routed using the $i^{th}$ routing scheme otherwise $C$ is not individually routable. Clearly this takes time linear in the number of edges  in $G[C]$. So the total time to verify all cycles is $O(m)$. We can extend this argument to show that we can also verify all pairs of cycles for mutual routability in $O(m)$ steps also. The main observation is that the pairs of cycles partition the routable edges into disjoints sets. 

Next define a graph $G_{cycle(\pi)} = (V ,E)$ whose vertices are the cycles ($V = \{C_i\}$) and two cycle are adjacent iff they are mutually routable in 2-steps. Additionally, $G_{cycle(\pi)}$ has loops corresponding to vertices which are individually routable cycles. We can modify any existing maximum matching algorithm to check whether $G_{cycle(\pi)}$ has a perfect matching (accounting for the self loops) with only a linear overhead. We omit the details. Then the next lemma follows immediately:
\begin{lemma}
	$rt(G,\pi) = 2$ iff there is a perfect matching in $G_{cycle(\pi)}$.
\end{lemma}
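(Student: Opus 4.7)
The plan is to prove each direction of the equivalence. Throughout, I use the assumption $rt(G,\pi) > 1$ (the case $rt(G,\pi)=1$ is trivial) so that ``$\le 2$'' means ``$= 2$''.

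For the easy direction ($\Leftarrow$), suppose $G_{cycle(\pi)}$ has a perfect matching $M$ (where a self-loop is allowed to cover a single vertex). For each self-loop in $M$ at an individually routable cycle $C$, fix an internal 2-step routing $(M_1^C, M_2^C)$ using edges of $G[C]$; for each true edge in $M$ between a mutually routable pair $(C,C')$, fix a 2-step routing $(M_1^{CC'}, M_2^{CC'})$ using edges between $C$ and $C'$. Since the cycles of $\pi$ are vertex-disjoint, taking $M_j = \bigcup M_j^{\ast}$ for $j=1,2$ yields two legal matchings in $G$ whose composition realises $\pi$, so $rt(G,\pi)\le 2$.

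The main content is the forward direction ($\Rightarrow$). Suppose $M_1, M_2$ route $\pi$ in $G$. The core structural claim I would establish is the following: if any edge of $M_1$ connects a vertex $v\in C$ to a vertex $u \in C' \neq C$, then (i) $|C|=|C'|$, and (ii) every edge of $M_1\cup M_2$ incident to $C\cup C'$ is an inter-cycle edge between $C$ and $C'$, exhibiting $C$ and $C'$ as mutually routable in 2 steps. Granting this claim, each cycle of $\pi$ is either coupled with exactly one other cycle via all-inter-cycle edges (a matched edge of $G_{cycle(\pi)}$) or uses only internal edges of $M_1\cup M_2$, hence is individually routable (a self-loop of $G_{cycle(\pi)}$). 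This partition is precisely a perfect matching of $G_{cycle(\pi)}$.

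To prove the structural claim, I would trace pebbles explicitly. Given $\{v,u\}\in M_1$, the pebble $p_v$ must end at $\pi(v)\in C$, forcing $\{u,\pi(v)\}\in M_2$; symmetrically $\{v,\pi(u)\}\in M_2$. Now consider the pebble $p_{\pi(v)}$: it starts at $\pi(v)\in C$, but after step~1 it occupies whichever vertex $\pi(v)$ is matched to in $M_1$, and must reach $\pi^2(v)$ after step~2. Combined with the fact that $\pi(v)$ is already committed in $M_2$ to be paired with $u$, a short case analysis forces $\pi(v)$ to be matched in $M_1$ to $\pi^{-1}(u)\in C'$. Iterating, $\{\pi^k(v),\pi^{-k}(u)\}\in M_1$ for every $k$, and closing the orbit after $k=|C|$ steps forces $\pi^{-|C|}(u)=u$, so $|C'|\mid |C|$; by symmetry $|C|=|C'|$. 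The same propagation shows that every vertex of $C$ is matched inter-cycle to $C'$ in $M_1$, leaving no room for internal edges or edges to a third cycle, which is exactly (ii).

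The main obstacle I anticipate is executing this propagation carefully enough that the reader sees why Corollary~1 (no three cycles mutually routable) is actually being used: the propagation guarantees that once $C$ interacts inter-cycle with $C'$, it cannot also be partially coupled with a third cycle $C''$, because every vertex of $C$ is already forced to be matched to $C'$ in both $M_1$ and $M_2$. Once this rigidity is established, assembling the partition into individual cycles and matched pairs—and hence the desired perfect matching in $G_{cycle(\pi)}$—is routine.
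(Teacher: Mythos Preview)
Your proposal is correct and follows essentially the same rigidity/propagation idea the paper uses: once a single inter-cycle edge appears in $M_1$, the entire matching on $C\cup C'$ is forced, yielding $|C|=|C'|$ and ruling out any mixing with internal edges or a third cycle. The only difference is packaging: the paper has already isolated this propagation as Lemma~1 (cases~1--3) and Corollary~1 and therefore declares Lemma~2 to ``follow immediately,'' whereas you re-derive the forced structure inline; both arrive at the same partition of the cycles into self-loops and matched pairs in $G_{cycle(\pi)}$.
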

\noindent It is apparent from the previous discussion that $G_{cycle(\pi)}$ can be constructed in $O(m)$ time. Since we have at most $k$ cycles,  $G_{cycle(\pi)}$ will have t most $2k$ vertices and at most $O(k^2)$ edges. 
Hence we can determine a maximum matching in $G_{cycle(\pi)}$ in  $O(k^{2.5})$ time \cite{10}. This gives a total runtime of $O(m + k^{2.5})$ for our algorithm which finds a 2-step routing scheme of a connected graph if one exists.
\begin{corollary}
	$rt(G) = 2$ iff $G$ is a clique.
\end{corollary}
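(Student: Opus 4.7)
The plan is to prove both directions of the corollary, using the explicit routing schemes from equations (1)--(2) for the ``if'' direction and Lemma 2 for the ``only if'' direction via a simple bad permutation.

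For the ``if'' direction, suppose $G = K_n$ with $n \ge 3$. The two-round schemes $(S_1,S_2)$ displayed in (1) and (2) show that any single cycle of $\pi$ is individually routable within the clique in two steps. Since the cycles of $\pi$ are pairwise vertex-disjoint, the corresponding matchings use disjoint edges and can be executed simultaneously, yielding a two-step routing of $\pi$; thus $rt(K_n) \le 2$. On the other hand, a 3-cycle permutation cannot be realized by a single matching (a matching is an involution), so $rt(K_n) \ge 2$.

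For the ``only if'' direction, I would argue by contrapositive. Suppose $G$ is connected but not a clique, and pick non-adjacent vertices $u,v$. Let $\pi$ be the transposition $(u,v)$ fixing all other vertices. Since $u$ and $v$ are non-adjacent, no single matching realizes the required swap, so $rt(G,\pi) \ge 2$. The cycle decomposition of $\pi$ consists of the single 2-cycle $\{u,v\}$ together with $n-2$ fixed-point 1-cycles. By Lemma 2, it suffices to exhibit that $G_{cycle(\pi)}$ admits no perfect matching. The vertex $\{u,v\}$ of $G_{cycle(\pi)}$ has no self-loop, since individual routability would require the edge $uv \in E(G)$, which is absent; by Lemma 1 the 2-cycle $\{u,v\}$ is not mutually routable with any cycle of a different length, hence not with any of the 1-cycles; and the decomposition contains no other 2-cycle. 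Therefore $\{u,v\}$ is an isolated vertex of $G_{cycle(\pi)}$ without a self-loop, so no perfect matching exists and $rt(G,\pi) > 2$, whence $rt(G) > 2$.

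The main (and minor) obstacle is confirming that Lemma 1's ``same length'' conclusion really rules out pairing the 2-cycle $\{u,v\}$ with a 1-cycle $\{w\}$; this is easily verified directly by observing that using only the (at most two) edges $uw$ and $vw$ between these two components, no two-step matching can both swap $p_u, p_v$ and restore $p_w$ to $w$.
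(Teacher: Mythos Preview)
Your proof is correct and follows essentially the same approach as the paper: for the clique direction you invoke the explicit two-round schemes (the paper simply cites \cite{5}), and for the converse you use the same bad permutation, the transposition $(u\,v)$ on a non-edge. The only cosmetic difference is that you route the argument through Lemma~2 and the perfect-matching picture of $G_{cycle(\pi)}$, whereas the paper appeals directly to Lemma~1; the underlying observations (no individual routing without the edge $uv$, and no mutual routing with a cycle of different length) are identical.
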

\begin{proof}
	\begin{description}
		\item[$\Rightarrow$ ] A two step routing scheme for $K_n$ was given in \cite{5}.
		\item[$\Leftarrow $] If $G$ is not a clique then there is at least a pair of non-adjacent vertices. Let $(i,j)$ be a non-edge. Then by Lemma 1 the permutation $(ij)(1)(2)\ldots(n)$ cannot be routed in two steps.\qed
	\end{description}
\end{proof}

\subsection{Determining $rt(G,\pi) \le k$ is hard for any $k \ge 3$}
\begin{theorem}
	For $k \ge 3$ computing $rt(G,\pi)$ is $NP$-complete.
\end{theorem}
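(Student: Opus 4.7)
The plan is to establish NP-membership (routine) and NP-hardness for $k=3$ by reduction from a classical NP-complete problem, then extend to general $k \ge 3$ by padding.

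Membership in NP is immediate: a routing scheme is a sequence of $k$ matchings $M_1,\ldots,M_k$, which is a polynomial-size witness, and its correctness is checked by simulating the $k$ swap rounds in $O(kn)$ time against the target permutation $\pi$.

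For hardness at $k=3$, I would reduce from a problem with inherent ``3-structure,'' such as 3-DIMENSIONAL MATCHING, 3-EDGE-COLORING (Holyer's theorem), or a constrained variant of 3-SAT like NAE-3-SAT or 1-IN-3-SAT. The intuition driving the reduction is that in three matching steps each pebble traverses a path of length at most three, so the individual trajectories are local, but the global scheduling constraint --- that edges used at each step form a matching --- is where the combinatorial difficulty lives. A concrete gadget scheme would build \emph{variable gadgets} as small subgraphs on which a cycle of pebbles admits essentially two disjoint 3-step routings (encoding the two truth values), linked by \emph{clause gadgets} containing auxiliary pebbles whose 3-step routing is feasible only when the adjacent variable gadgets make a satisfying choice. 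The 2-step analysis from the previous section is valuable as a building block: Lemma 1 and Corollary 1 severely restrict which sub-permutations clear in two steps on a sparse graph, so the ``third'' step can be forced to carry specific constraints that encode the source instance.

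To push hardness from $k=3$ to arbitrary $k\ge 3$, I would attach a disjoint component together with a permutation on it whose routing time is exactly $k$ (for example, a suitable shift on a cycle or path of length chosen to require $k$ rounds). Since disjoint components do not share edges in any matching, the overall routing time is the maximum over the components; the reduction for $k=3$ therefore lifts directly to any $k\ge 3$.

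The main obstacle is the gadget construction for $k=3$, and specifically its \emph{soundness}: in a NO instance one must rule out \textbf{every} possible 3-step scheme, not only the ``intended'' one, and the adversary enjoys the full freedom of choosing any three matchings with arbitrary intermediate permutations. Controlling this freedom requires exploiting graph sparsity, cycle-length restrictions arising from Lemma 1, and forced sub-matchings, so that any deviation from the intended routing inside one gadget cascades into an infeasibility elsewhere. Engineering both completeness and soundness with a clean, verifiable construction is, in my view, the technical heart of the theorem.
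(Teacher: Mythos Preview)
Your high-level plan matches the paper's strategy: the paper also reduces from a satisfiability problem (ordinary 3-SAT) via variable gadgets and clause gadgets. The paper's concrete construction, however, is precisely the part you defer: each variable $X$ becomes a cyclically wrapped chain of hexagons on which a family of transpositions $(a_i,b_i)$ admits exactly two 3-step routings (``all left paths'' vs.\ ``all right paths''), freeing either all the $x_i$ vertices or all the $\neg x_i$ vertices; each clause $C$ becomes a pair $(a_C,b_C)$ at distance~3 with three internally disjoint length-3 paths through vertices labeled by its literals; and short ``$f$-chains'' of squares wire literal vertices to their variable gadgets so that using a literal vertex in a clause forces the corresponding truth value globally. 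You correctly identify soundness as the crux, and the paper discharges it by arguing, gadget by gadget, that any 3-step routing is forced into one of the intended patterns. Since your proposal supplies none of this, it is a plan rather than a proof.

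There is also a genuine error in your lift to $k>3$. Attaching a disjoint component $H$ with $rt(H,\sigma)=k$ gives $rt(G\cup H,\pi\cup\sigma)\le k \iff rt(G,\pi)\le k$, not $\iff rt(G,\pi)\le 3$. If $\phi$ is unsatisfiable the $k=3$ reduction only guarantees $rt(G,\pi)\ge 4$; nothing prevents $rt(G,\pi)$ from being $4$ or $5$, in which case $rt(G\cup H,\pi\cup\sigma)=k$ and your padded instance is a false positive. To extend to $k>3$ you must pad \emph{serially} so that every routing of $(G',\pi')$ in $\le k$ steps induces a routing of $(G,\pi)$ in $\le 3$ steps --- e.g., by stretching each moving pebble's trajectory with a private pendant path of length $k-3$ that it is forced to traverse first --- or else redo the gadget analysis for each $k$. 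The paper, incidentally, states the theorem for all $k\ge 3$ but presents only the $k=3$ construction, so this extension is left implicit there as well.
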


\begin{proof}
	\begin{figure}[h]
		\includegraphics[width=10cm]{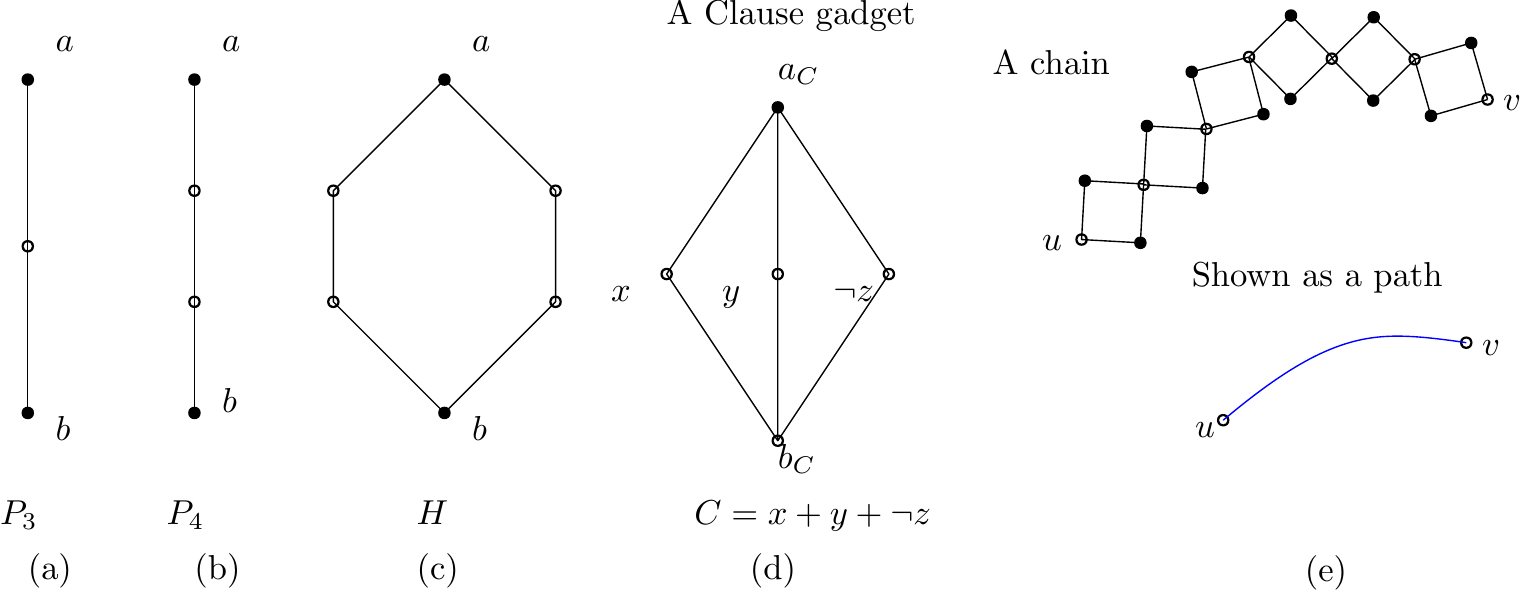}
		\centering
		\caption{Atomic Gadgets, pairs $ (a,b) $ need to swap their pebbles. The unmarked circles have pebbles that is fixed.} 
	\end{figure}
	Proving it is in $NP$ is trivial, we can use a set of matchings as a witness. For the NP hardness proof we first define three \textit{atomic} gadgets (see Figure 6) which will be use to construct the variable and clause gadgets. Vertices whose pebbles are fixed (1 cycles) are represented as circles.  Otherwise they are represented as black discs. So in the first three sub-figures ((a)-(c)) the input permutation is $(a,b)$\footnote{We do not write the 1 cycles explicitly for notational clarity.}. In all our construction we shall use permutation consisting of only 1 or 2 cycles. Each cycle labeled $i$ will be represented as a pair $(a_i,b_i)$. If the correspondence between a pair is clear from the figure then we shall omit the subscript.  It is an easy observation that $rt(P_{3},((a,b))) = rt(P_{4},((a,b))) = rt(H,((a,b))) = 3$. In the case of the hexagon $H$ we see that in order to route the pebbles within 3 steps we have to use the left or the right path, but we cannot use both paths simultaneously (i.e., $a$ goes through left but $b$ goes through the right). Figure 6(e) shows a chain of squares connecting $u$ to $v$. If vertex $u$ is used during routing any pebble other than the two pebbles to its right then the chain construction forces $v$ to be used in routing the two pebbles to its left. This chain is called a \textit{f-chain}. In our construction we  use chains of constant length to simplify the presentation of our construction.  
	\begin{figure}[h]
		\includegraphics[width=12cm]{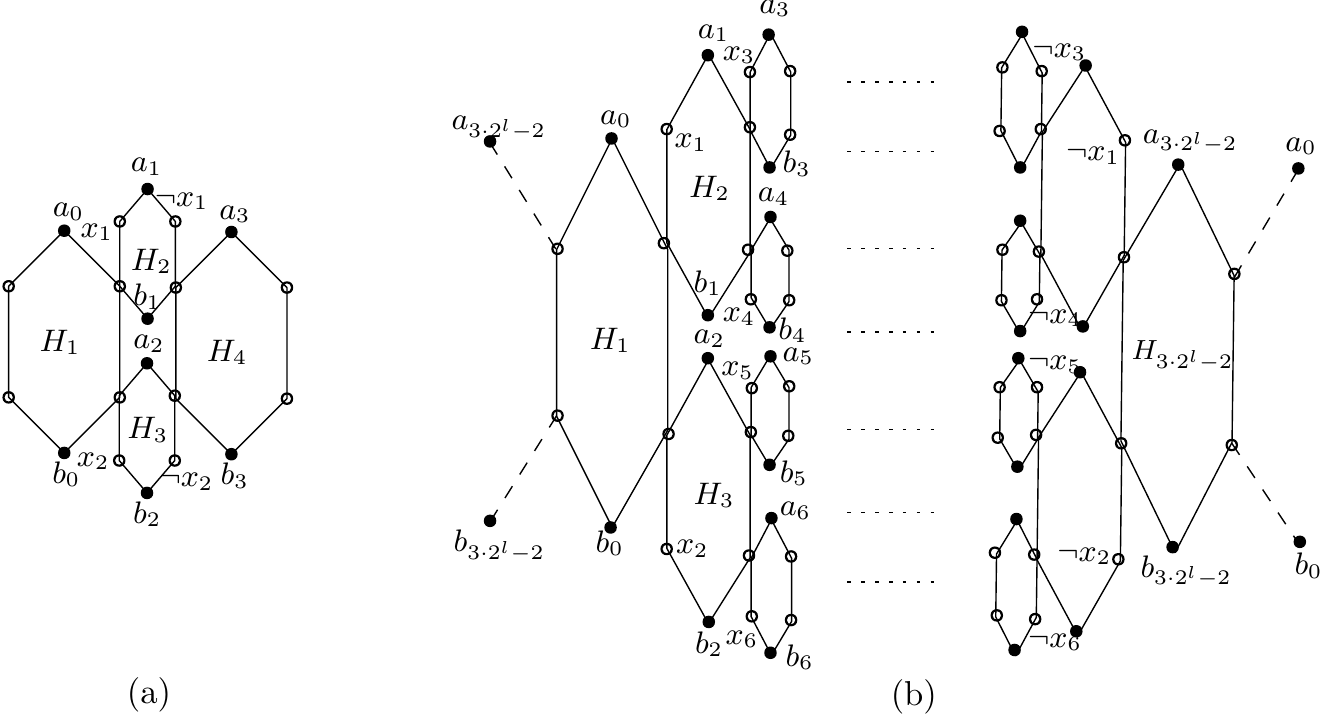}
		\centering
		\caption{Variable graph of $X$. (a) is a special case for $l=1$, (b) is the general case.} 
	\end{figure}
	\subsubsection{Clause Gadget:} Say we have clause $C = x \vee y \vee \neg z$. In Figure 6(d) we show how to create a clause gadget. This is referred to as the \textit{clause graph} $G_C$ of the clause $C$. The graph in Figure 6(d) can route $\pi_C=(a_c,b_C)$ in three steps by using one of the three paths between $a_C$ and $b_C$. Say, $a_C$ is routed to $b_C$ via $x$. Then it must be the case that vertex $x$ is not used to route any other pebbles. We say the vertex $x$ is \textit{owned} by the clause. Otherwise, it would be not possible to route $a_C$ to $b_C$ in three steps via $x$. We can interpret this as follows. A clause has a satisfying assignment iff its clause graph has a owned vertex.
	
	\subsubsection{Variable Gadget:} Construction of the variable gadgets are little more involved than the clause gadgets. For some $l > 0$, let the variable $X$ is in $m_X \le 2^{l+1}-2$ clauses. The variable gadget corresponding to $X$ is shown in Figure 7(b). Vertically aligned hexagons are all in one level. Number of levels is $2l+1$. The left most hexagon $H_1$ and the rightmost hexagon $H_{3\cdot 2^{l}-2}$ share a common edge as indicate in the figure making it circularly wrapped. The permutation we will route on $G_X$ (the variable graph of $X$) is $\pi_X = (a_0b_0)(a_1b_1)(a_2b_2)\ldots(a_{2^{l+1}-2}b_{2^{l+1}-2})$. For each variable we shall have a separate graph and a corresponding permutation on its vertices. In the graph $G_X$ there are only two possible ways to route $\pi_X$ in two steps. 1) If we route $(a_0b_0)$ using the right path in $H_1$ this forces $(a_1b_1)$ and $(a_2b_2)$ to be routed using the right paths in their respective hexagons $H_2$ and $H_3$. Continuing in this way we see that $(a_{2^{l+1}-2}b_{2^{l+1}-2})$ must be routed using the right path of $H_{3\cdot 2^{l}-2}$. In during this routing the vertices $x_1,x_2,\ldots,x_{2^{l+1}-2}$ are not used and hence are free and can be owned by some clause. 2) If we route $(a_0b_0)$ using the left path of $H_1$, the opposite happens and $\neg x_1,\neg x_2,\ldots,\neg x_{2^{l+1}-2}$ will be the free vertices in this case. This forces variable assignment. The former and latter case corresponds to true (right) and false (left) assignment of $X$ respectively.
	
	\subsubsection{Reduction:} For each clause $C$, if the literal $x \in C$  then we connect $x_i \in G_X$  (for some $i$) to the vertex labeled $x \in G_C$ via an \textit{f-chain}. If $\neg x \in C$ then we connect it with $\neg x_i$ via an \textit{f-chain}. This is our final graph $G_{\phi}$ corresponding to an instance of a 3-$ \mathtt{SAT} $ formula. The input permutation is $\pi = \pi_X\ldots\pi_C\ldots\pi_f\ldots$, which is the concatenation of all the individual permutation on the variable, clause  graphs and $f$-\textit{chains}.  This completes our construction. We need to show, $rt(G_\phi,\pi) = 3$ iff $\phi$ is satisfiable. Suppose $\phi$ is satisfiable. Then for each variable $X$, if the literal $x$ is true then we use left routing in $G_X$, otherwise we use right routing. This ensures in each clause graph there will be at least one owned vertex.  Now suppose $(G_{\phi},\pi) = 3$. Then each clause graph has at least one owned vertex. If $x$ is a free vertex in some clause graph then $\neg x$ is not a free vertex in any of the other clause graphs, otherwise variable graph $G_X$ will not be able rout its own permutation in 3 steps. Hence the set of free vertices will be a satisfying assignment for $\phi$. It is an easy observation that the number of vertices in $G_\phi$ is polynomially bounded in $n,m$; the number of variables and clauses in $\phi$ respectively and that $G_\phi$ can be explicitly constructed in polynomial time.

\end{proof}

\section{Optimal Sorting Network For A Given Graph}
In this section we introduce {\it sorting numbers} for graphs in the context of  sorting networks. Majority of existing literature on sorting networks focus on optimizing the depth (number of concurrent stages) and size (total number of comparators used) of such networks. Here, we study a slightly different problem. Let the \textit{sorted order} of the vertices of a labeled graph $G$ be the permutation $\pi$ that assigns a rank $\pi(i)$ to the vertex labeled $i$.
Given $G$ the task is to design an optimal sorting network (in terms of depth) on $G$. Before we make this notion formal we need to define a sorting network.

\begin{definition}[Sorting Network]
	A sorting network is a triple $\mathcal{S}(H,M,\pi)$ such that:
	\begin{enumerate}
		\item $H$ is a connected labeled graph having $n$ vertices and a sorted ordering $\pi$ on its vertices. Initially each vertices of $H$ contains a pebble having some value, that is they act as input terminals of the network.
		\item The ordered set $M$ consists of directed matchings in $H$. In a directed matching some edges in the matching have been assigned a direction. Sorting occurs in stages.
		At stage $i$ we use the matching $m_i \in M$ to exchange pebbles between matched vertices according to their orientation. For an edge $\overrightarrow{uv}$, when swapped the smaller of the two pebble goes to $u$. If an edge is undirected then both pebbles swap regardless of their order.
		\item After $|M|$ stages the vertex labeled $i$ contains the pebble whose rank is $\pi(i)$ in the sorted order of the pebbles.  $|M|$ is called the depth of the network. Additionally, this must hold for all ($n!$) initial arrangement of the pebbles.
		\item Each edge in $H$ is in some matching, that is $H$ is minimal. 
	\end{enumerate}
	
\end{definition}

\noindent We say  $\mathcal{S}(H,M,\pi)$ is a sorting network on $G$ if $H$ is a spanning subgraph of $G$. Let $\mathcal{S}_G$ be the set of all such sorting networks on $G$ over all possible spanning subgraphs and sorted orderings (non-isomorphic).

\begin{definition}[Sorting Number]
	Sorting number $st(G)$ of a graph $G$ is defined to be minimum depth of any sorting network on $G$. Additionally, $st(G,\pi)$ is the sorting number of $G$ over all possible sorting network on $G$ with a fixed sorted order $\pi$.
\end{definition}

\begin{lemma}
For any $\pi$ we have $st(G, \pi) \le st(G) + O(n)$.
\end{lemma}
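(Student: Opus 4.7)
The plan is to exhibit, for any target order $\pi$, a sorting network on $G$ whose depth is only $O(n)$ more than an optimal network that uses some other sorted order. The key observation is that once a sorting network has finished executing, the locations of all pebbles are completely determined by the sorted order it implements, so reshuffling them into a different sorted order reduces to a fixed permutation routing problem, which we can bound via the Alon et al.\ result cited earlier in the paper.

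First I would unpack the definition of $st(G)$: there exists a sorted order $\pi^\ast$ achieving $st(G) = st(G,\pi^\ast)$, and a corresponding sorting network $\mathcal{S}(H^\ast, M^\ast, \pi^\ast)$ of depth $st(G)$ on a spanning subgraph $H^\ast$ of $G$. For any initial arrangement of the input, after running these $st(G)$ stages each vertex $v$ holds the pebble of rank $\pi^\ast(v)$. Now let $\sigma$ be the unique permutation on $V(G)$ that moves, for every rank $r$, the pebble from the vertex $v$ with $\pi^\ast(v) = r$ to the vertex $u$ with $\pi(u) = r$. Crucially, $\sigma$ depends only on $\pi$ and $\pi^\ast$, not on the input.

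Second, I would invoke the Alon et al.\ bound $rt(G) \le 3n$ for connected graphs: there is a sequence $m_1,\dots,m_t$ of (undirected) matchings in $G$ with $t \le 3n$ whose cumulative action on pebbles realizes $\sigma$. Appending these matchings (viewed as undirected edges, which by the definition of sorting network are allowed and simply swap) to $M^\ast$ yields an ordered list $M = M^\ast \cdot (m_1,\dots,m_t)$ of matchings on the spanning subgraph $H = H^\ast \cup \bigcup_i m_i$ of $G$. After the first $st(G)$ stages the pebbles are in $\pi^\ast$-sorted position, and the remaining $t$ stages permute them according to $\sigma$, placing the pebble of rank $r$ at the vertex $u$ with $\pi(u) = r$, as required. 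The minimality condition (every edge of $H$ is used in some matching) is automatic because we built $H$ as the union of the edges actually appearing in $M$.

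Finally, $\mathcal{S}(H, M, \pi)$ is a valid sorting network on $G$ of depth $|M| = st(G) + t \le st(G) + 3n$, which gives $st(G,\pi) \le st(G) + O(n)$. The only point needing a line of care is that the concatenation remains a legitimate sorting network for $\pi$ on all $n!$ inputs, which follows because correctness of the first phase guarantees a fixed configuration at stage $st(G)$, and from that fixed configuration the appended routing produces the $\pi$-sorted configuration deterministically. There is no real obstacle here beyond making sure the undirected matchings used for routing fit inside the sorting network formalism, which the definition explicitly permits.
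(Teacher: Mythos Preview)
Your proof is correct and follows essentially the same approach as the paper: run an optimal sorting network for some order $\pi^\ast$, then route the fixed permutation $\pi^{-1}\pi^\ast$ using the $O(n)$ routing bound from Alon et al. Your write-up is in fact more careful than the paper's about the details (the spanning subgraph $H$, the use of undirected matchings, and the minimality condition), but the underlying idea is identical.
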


\noindent The above lemma implies that if we construct a sorting network for some arbitrary sorted order on the vertices then we suffer a penalty of $O(n)$ on the depth of our network as compared to the optimal one. 
\begin{proof}
	Let $\mathcal{S}(H, M, \pi^*)$ be an optimal sorting network on $G$. Using this network we can create another sorting network $\mathcal{S}(H, M, \pi)$ whose depth is $O(n)$ more than the optimal one. This can be done in two rounds. First we use the sorting network $\mathcal{S}(H, M, \pi^*)$ to determine after $st(G)$ stages the ranks of each pebble in the sorted order. After this step we will know that the pebble at vertex $i$ has a rank $\pi^*(i)$. But $\mathcal{S}(H, M, \pi)$ sends a pebble with rank $\pi^*(i)$ to the vertex labeled $\pi^{-1}(\pi^*(i))$. Hence we can route the fixed permutation $\pi^{-1}\pi^*$ on $G$ during the second round to arrive at the desired sorted order $\pi$. Since we can route a permutation on any graph in $O(n)$ steps the lemma follows.\qed

\end{proof}

\noindent Note that if $G$ is not connected than $st(G) = \infty$. Otherwise, there always exists a spanning tree $T$ of $G$ and $st(G) \le st(T)$. The main result of this section will be to obtain both a lower and an upper bounds for $st(T)$. We start by restating some previous results for sorting networks with restricted topology under this new framework. The path graph $P_n$ is one of the simplest case. We know that $st(P_n) = 2n$. This follows from the fact that the classical odd-even transposition sort takes $2n$ matching steps and that is optimal. Next we discuss some known bounds for the sorting numbers of some common graphs starting with the complete graph. These results are summarized in Table 1. For the the complete graph $K_n$. Ajtai-Komlos-Szemerdi (AKS) sorting network directly gives an upper bound of $O(\log {n})$ for the sorting number of $K_n$. In this case also the bound is tight. For the  $n$-cube $Q_n$ we can use the Batcher's Bitonic sorting network, which has a depth of $O((\log n)^2)$ \cite{2}. This was later improved to $2^{O(\sqrt{\log\log n})}\log n$ by Plaxton and Suel\cite{11}.  We also have a lower bound of $ \Omega(\frac{\log n \log\log n}{\log \log \log n})$ due to Leighton and Plaxton. For the square mesh $P_n \times P_n$ it is known that $st(P_n \times P_n)  = 3n + o(n)$, which is tight with respect to the constant factor of the largest term. This follows from results of Schnorr \& Shamir \cite{4}, where they introduced the $3n$-sorter for the square mesh.

\begin{table}[ht]
	\caption{Known Bounds On The Sorting Numbers Of Various Graphs} 
	\centering 
	\begin{tabular}{c c c c} 
		\hline\hline 
		Graph & Lower Bound & Upper Bound & Remark \\ [0.5ex] 
		\hline 
		Complete Graph ($K_n$) & $\log n$ & $O(\log n)$ & AKS Network \cite{1}\\ 
		Hypercube ($Q_n$) & $ \Omega(\frac{\log n \log\log n}{\log \log \log n}) $\cite{11} &  $2^{O(\sqrt{\log\log n})}\log n$ & Leighton and Plaxton \cite{13} \\
		Path ($P_n$) & $n-1$ & $2n$ & Odd-Even Transposition Sort \\
		Mesh ($P_n \times P_n$) & $3n - 2{\sqrt{n}}-3$ & $3n+O(n^{3/4})$ & Schnorr \& Shamir \cite{4}\\ 
		Tree & $ \Omega(n^2)$ & $ O(\min(\Delta^2n,n^2)) $& this paper\\[1ex] 
		\hline 
	\end{tabular}
	\label{table:nonlin} 
\end{table}

\begin{figure}[h]
	\includegraphics[width=6.5cm]{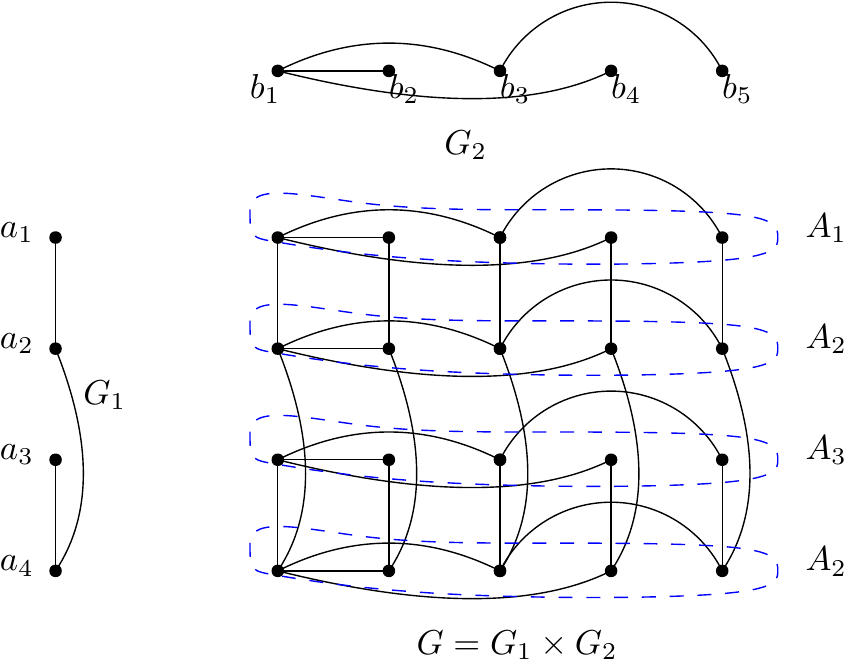}
	\centering
	\caption{The product graph $G = G_1\times G_2$. The rows highlighted by blue regions represents the super vertices of $G'$. } 
\end{figure}

\noindent Given two graphs $G_1(V_1,E_1)$ and $G_2(V_2,E_2)$ their Cartesian product is the graph $G(V,E)$ such that $V = V_1 \times V_2$ and $E = \{((u_1,u_2),(v_1,v_2))| u_1,v_1 \in V_1, u_2,v_2 \in V_2\ \mbox{and either}\ u_1=v_1\ \mbox{or}\ u_2=v_2  \}$. The next theorem gives the sorting number of a product graph in terms of sorting numbers of its components.
\begin{theorem}
	Given two graphs $G_1$ and $G_2$, the sorting number of their Cartesian product is $$st(G_1 \times G_2) \le st(G_1)st(G_2) + st(G_1)+ st(G_2)$$
\end{theorem}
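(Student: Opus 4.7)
My plan is to build a sorting network on $G = G_1\times G_2$ by combining optimal sorting networks $\mathcal{S}_1$ on $G_1$ (of depth $st(G_1)$) and $\mathcal{S}_2$ on $G_2$ (of depth $st(G_2)$). View $G$ as a grid whose rows are copies of $G_1$ indexed by $V_2$ and whose columns are copies of $G_2$ indexed by $V_1$; following Figure~4, the rows play the role of super-vertices of a simulated sorting network on $G_2$. I take the target ordering to be $\pi(v_1,v_2) = (\pi_2(v_2)-1)|V_1|+\pi_1(v_1)$, and use Lemma~3 to absorb any other target with an $O(n)$ overhead.

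The construction has two phases. Phase~A applies $\mathcal{S}_1$ in parallel to every row: a single matching of $\mathcal{S}_1$ lifts to a valid matching of $G$ because the $G_1$-edges belonging to distinct rows are vertex-disjoint, so this takes exactly $st(G_1)$ steps and leaves each row $\pi_1$-sorted. Phase~B simulates $\mathcal{S}_2$ on the super-vertices: for each $m_j\in M_2$ and each oriented super-vertex comparator $\overrightarrow{uv}\in m_j$, I implement a merge of rows $u$ and $v$ in two substeps --- (i) one step of directed compare-exchange along the $|V_1|$ parallel $G_2$-edges joining the two rows, with the minimum going to the row of smaller $\pi_2$-rank, and (ii) a full $\mathcal{S}_1$-resort of each affected row. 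Disjoint super-vertex comparators inside the same $m_j$ act on disjoint row-pairs and therefore run in parallel, so each simulated $\mathcal{S}_2$-step costs $st(G_1)+1$. The total depth is $st(G_1) + st(G_2)\bigl(st(G_1)+1\bigr) = st(G_1)\,st(G_2) + st(G_1) + st(G_2)$, matching the claim.

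For correctness I would invoke the 0-1 principle, so it suffices to check every 0-1 input. After Phase~A each row is a sorted string $0^{a_u}1^{b_u}$. I would then maintain the invariant that rows stay sorted throughout Phase~B, which reduces the analysis to showing that Phase~B executes $\mathcal{S}_2$ as a comparator network on the sequence of row values (essentially the per-row zero counts), so that after Phase~B these counts are $\pi_2$-sorted and the final arrangement matches the target $\pi$-sorted one.

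The main obstacle is the merge lemma for Phase~B. A direct analysis of substep~(i) on two sorted rows with zero counts $a_u$ and $a_v$ shows that the position-by-position compare-exchange only swaps the majorities to $(\max(a_u,a_v),\min(a_u,a_v))$, which is strictly weaker than a genuine merge that would pool $\min(a_u+a_v,|V_1|)$ zeros into row $u$; substep~(ii) must therefore do real rebalancing across rows, not just tidy up each row on its own. The plan is to interleave the $\mathcal{S}_1$-comparators of substep~(ii) with additional inter-row exchanges along the same $G_2$-matching used in substep~(i), effectively treating substeps~(i) and~(ii) together as a single depth-$(st(G_1)+1)$ merge network on the $2|V_1|$ affected pebbles. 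Getting the direction conventions, the minimality of $H$ required by clause~(4) of Definition~1, and the inductive invariant straight is what I expect to consume most of the remaining effort.
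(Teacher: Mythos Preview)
You have correctly located the obstacle, but your proposed workaround is not the right fix, and the paper's actual idea is simpler than what you are reaching for. The missing ingredient is a bitonic-style trick: before the single inter-row compare-exchange, sort the \emph{lower} row of each matched pair in ascending order and the \emph{upper} row in \emph{descending} order (i.e.\ run $\mathcal{S}_1$ with all comparator directions reversed on the upper row). For a 0--1 input, an ascending row $0^{a}1^{b}$ against a descending row $1^{b'}0^{a'}$ becomes, after one position-wise compare-exchange, a genuine merge: either the lower row is all $0$'s or the upper row is all $1$'s, so every pebble in the lower row is $\le$ every pebble in the upper row. The subsequent re-sort of each row then restores the invariant. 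This is precisely the standard fact that replacing every comparator of a sorting network by a merge subnetwork yields a correct sorter on multisets, which is how the paper argues correctness.

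With this change your phase structure needs only a small reorganisation: there is no separate Phase~A. Each of the $st(G_2)$ simulated steps consists of (i) sort every row with $\mathcal{S}_1$, choosing ascending or descending according to whether the row is lower, upper, or non-participating in the current $m_j\in M_2$ (cost $st(G_1)$), followed by (ii) one compare-exchange along the $G_2$-edges of $m_j$ (cost $1$). A final ascending sort of every row (cost $st(G_1)$) gives the claimed total $st(G_2)\bigl(st(G_1)+1\bigr)+st(G_1)$. Your attempt to repair the merge by interleaving extra inter-row exchanges with the intra-row $\mathcal{S}_1$-sort is unnecessary and, as stated, has no clear reason to fit in $st(G_1)+1$ steps.
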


\noindent The network that sorts $G = G_1 \times G_2$ is constructed via networks for $G_1$ and $G_2$ as follows. Let, $V_1 = \{a_1,...,a_{n_1}\}$ and $V_2=\{b_1,...,b_{n_2}\}$ be the vertex set of the graphs $G_1$ and $G_2$ respectively. The labeling of the vertices are based on the output ranks of the sorted order. The graph $G$ with vertex set $V=\{(a_i,b_j)\}$ can be visualized in a grid of size $n_1n_2$. Each row consists of a copy of $G_2$ and each column consists of a copy of $G_1$. See Figure 2 for an illustration. 

The sorting network for $G$ consists of the following matching scheme. Let 
$M_1$ and $M_2$ be the respective matching schemes for $G_1$ and $G_2$. If $(u,v)$ is a pair of matched vertices in some matching, we assign a direction to the edge $uv$  according to the comparator attached to the edge $uv$. If smaller of the two pebbles is put in $u$ after the exchange then we say the edge is directed from $u$ to $v$. Then $u$ is called the lower vertex and $v$ is called the upper vertex. A directed matching thus partitions the vertex set into three parts: upper, lower and non-participating vertices. Let, $M_1 = (m_{1},...,m_{{st(G_1)}})$ and $M_2 =  (m'_{1},...,m'_{{st(G_2)}})$.

We start by sorting each row of $G$, which have copies of $G_2$ using the sorting network $M_2$. However, each row corresponds to a vertex in $G_1$.  Consider the set of upper, lower and non-participating vertices of $G_1$ for the matching $m_1$. These vertices partition the rows of $G$ into three parts. For each row in $G$ if it is associated with a lower vertex in $G_1$ then we call it a lower row. Similarly we define upper rows and non-participating rows. For each lower row then we sort it normally using the sorting network of $G_2$. If the row is an upper row we sort it using the sorting network of $G_2$ where the direction of the comparators have been reversed. We leave the non-participating rows unmatched. Next we use the matching $m_1$ to do a compare exchange on the columns of $G$. These two stages (sorting on rows (copies of $G_2$) and the application of a matching from $M_1$) together constitute a single \textit{full} stage in $G$. The set of matchings without the final compare-exchange on columns constitute a half-stage. Hence a full stage consists of $st(G_2) + 1$ matchings on $G$. Continuing, we invoke a full stage corresponding each successive matching in $M_1$, hence for $st(G_1)$ full stages. At the end we need to sort every row of $G$ in ascending order.  This last stage is a half stage and adds an additional $st(G_2)$ matchings  to the sorting network. The final sorted order of vertices are $((a_1,b_1) \le (a_1,b_2) ... \le (a_1,b_{n_2}) \le (a_2,b_1) ... (a_2,b_{n_2})\le...\le(a_{n_1},b_1)...\le (a_{n_1},b_{n_2}))$.

\begin{proof}[proof of correctness]	 
	The correctness of the above procedure can be proven using the 0-1 principle. Each half stage in $G$ consists of sorting in ascending order or descending order. This is followed up by a compare exchange between the matched rows. Consider a pair of matched rows $A_i = ((a_i,b_1),...,(a_i,b_{n_2})$ and $A_j = ((a_j,b_1),...,(a_j,b_{n_2})$ corresponding to vertices  $a_i$ and $a_j$ in $G_1$. Assume $i$ precedes $j$ in the sorted order in $G_1$ ($i < j$). Since $A_i$ is sorted in ascending order and $A_j$ in descending order, a compare-exchange between the pairs $(((a_i,b_1),(a_j,b_1)),...,((a_i,b_{n_2}),(a_j,b_{n_2})))$ is a merge operation for 0-1 input. Hence, after the compare exchange we have $u \le v$ for every pair of vertices, where $u \in A_i$ and $v \in A_j$. It is known that \cite{12} if we replace every comparator of a sorting network by a merging subnetwork then the new network correctly sorts every input sequence whose elements are multi-sets  instead of single elements. In our case $A_i$'s corresponds to these multi-sets and the sorting operation (on copies of $G_2$) followed by exchange correspond to a single merging operation in the network $G_1$. \qed 
\end{proof}
Recall the analogous result for the routing number of the product graph \cite{5}. We have $rt(G_1 \times G_2)\le 2 \min(rt(G_1), rt(G_2)) + \max(rt(G_1), rt(G_2))$. The corresponding bound for the sorting number is much worse. Since a $n$-cube $Q_n$ can be written as the Cartesian product of $Q_{n-1} \times K_2$, from Theorem 1 we see that $d(Q_n) \le O(3^n)$, which is $O(N^{\log{3}})$ where $N = 2^n$ is the number of vertices in a $n$-cube. Unfortunately, although non-trivial, the above bound is weak for $n$-cubes. However, we believe that for $K_{1,n-1} \times K_{1,n-1}$ (product of two stars) the bound of the theorem may be tight. 

\subsection{Sorting Number Of Trees}
Here we informally discuss the lower bound of $st(T)$. This occurs when the tree is a star. For a star $K_{1,n-1}$ there are only $n$ non-isomorphic sorted orders. Without loss of generality we assume in the ordering the center gets the pebble ranked $n$. Let $M = (e_1,\ldots,e_{st(T)})$ be a sequence of matchings, which are in this case are just singleton edges. The important observation is this: once a pebble is placed in its final sorted position, it must stay there for the remainder of the matchings for $st(T)$ to be minimum. Given $M$, consider the input permutation of pebbles (given by an adversary) which makes the first pebble  to be put into its correct place be first matched at least after $(n-1)$ steps. Similarly, the second pebble to be matched at least after an additional $n-2$ steps and so on. This would ensure that it takes at least $\Omega(n^2)$ steps to obtain the sorted order. Note, if the sorted order had put some other pebble ($i$) than the pebble ranked $n$ at the center, then it takes at most 1 additional step to put $n$ at the center.
\subsection{An Upper Bound}
	\begin{figure}[h]
		\includegraphics[width=4.5cm]{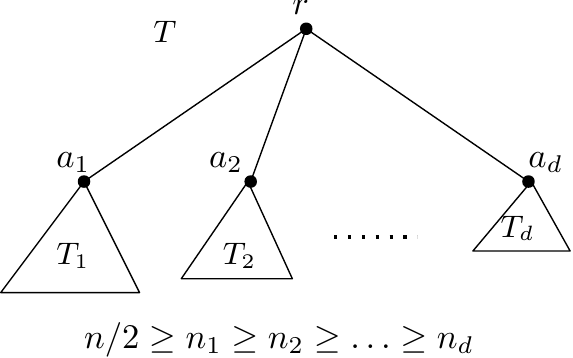}
		\centering
		\caption{A balanced decomposition of a tree. } 
	\end{figure}
	
		\begin{algorithm}[H]
			\SetKwData{Left}{left}\SetKwData{This}{this}\SetKwData{Up}{up}
			\SetKwFunction{Sort}{Sort}\SetKwFunction{fc}{MaximalLayer}
			\SetKwInOut{Input}{Input}\SetKwInOut{Output}{Output}
			\SetKwFor{Pardo}{pardo}{do}{endfor}
			\Input{$T$ with root $r$}
			\Output{Pebbles are sorted according to an MP labeling}

			\Begin{
				\If{$|T| == 1$}{\KwRet}\tcp*[l]{Begining of phase-1}
				\For{$i$ from $d-1$ to 1}{
					\For{$j$ from 1 to $i$}{
						$\mathsf{Swap}(T_j,T_{j+1};r)$;
					}
				}
				\tcp*[l]{Begining of phase-2}
				\Pardo{$i$ from 1 to $d$}{
					\If{$i$ == 1}{$\mathsf{OddEvenTreeSort}(T_1',r_1)$}\Else{$\mathsf{OddEvenTreeSort}(T_i,r_i)$}
				}
				
			}
			\caption{$\mathsf{OddEvenTreeSort}(T,r)$}
		\end{algorithm}

In this section we present an oblivious sorting algorithm for trees. The algorithm  $\mathsf{OddEvenTreeSort}$ is a natural generalization of the classical odd-even transposition sort algorithm. First recall the following fact about trees: for any tree $T$ with $n$ vertices there exists a vertex $r$ whose removal produces connected components of size $\le n/2$.  We can take this special vertex $r$ as the root, which we assume has $d$ children, see Figure 3. Let the subtree $T_i$ have $n_i \le n/2$ nodes and any non-leaf node has at most $\alpha_i$ children. Further, assume the subtrees are arranged in descending order according to their size from left to right ($n_1 \ge n_2 \ge \ldots \ge n_{d}$). The sorted order $\pi(T)$ is defined recursively as follows.  

\textit{\begin{enumerate}
		\item $T_1 \cup \{r\}$ will have the  $n_1 + 1$ smallest pebbles.
		\item The tree $T_i$ has pebbles whose ranks are between $\sum_{j=1}^{i-1}n_j+2$ to $\sum_{j=1}^{i}n_j + 1$. 
		\item Labeling of each subtree $T_i$ is defined recursively based on an appropriately chosen root $r_i$ which partitions $T_i$ in a balanced manner.
	\end{enumerate}}

\noindent We call this the \textit{\underline{m}ulti-root \underline{p}re-order} (MP) labeling. Note that the root $r_i$ of $T_i$ may not be the the $a_i$ in Figure 3. Given a tree this ordering can be easily precomputed and it is fixed afterwards. Furthermore once we have our sorting network, using Lemma 8 we can easily create another sorting network with a more natural ordering\footnote{For example we can use the pre-order ranks of the vertices in $T$ as our sorted order.} using an additional $O(n)$ steps.
 The $\mathsf{OddEvenTreeSort}(T, r)$ has two main phases: 1) In phase-1 we use the subtrees as buckets to partition the pebbles such that $T_1' = T_1\cup\{r\}$ gets the first $n_1+1$ smallest pebbles, $T_2$ the next $n_2$ smallest pebbles and so on. 
2) Next in phase-2 we call $\mathsf{OddEvenTreeSort}(T_i,r_i)$ recursively for all the subtrees $T_1',\ldots,T_{d}$. Sorting on these subtrees happens in parallel. Let as assume the number of matchings needed to partition the pebbles during the first phase is $ S(n,d; \alpha_1,\ldots,\alpha_{d}) $, if the root $r$ has degree $d$ and the subtree $T_i$ has maximum arity of $\alpha_i$. Then the total number of stages in $\mathsf{OddEvenTreeSort}$ is given by the following recurrence:

\begin{align}
C(T) = \max(C(T'_1),\max_i{C(T_i)}) + S(n,\Delta; \alpha_1,\ldots,\alpha_d)
\end{align}


Since any routing between the subtrees must use the root $r$, we route the pebbles between a pair of subtrees at a time. The procedure $\mathsf{Swap}(T_i,T_{i+1};r)$ takes a pair of consecutive subtrees and sorts the pebbles in such a away that after completion each pebble in $T_i' = T_i \cup \{r\}$ is smaller than pebbles in $T_{i+1}$. We first describe in detail $\mathsf{Swap}(T_i,T_{j};r)$. Figure 4 shows the two subtrees $T_i$ and $T_j$ connected via $r$. Let the height of $T_i$ be $h_i$ and the height of $T_j$ be $h_j$. Total number of levels in $T_{ij}$ is $h_i+h_j+1$. Vertices are grouped according to the level they are in starting from the leftmost vertices (which are at depth $h_i$ in $T_i$) which are assigned to group 1. Based on its group number a vertex is either an odd or an even vertex. Each stage consists of either matchings between odd-even vertices or even-odd vertices. For each non-leaf node we pick an arbitrary but fixed ordering of its children so that at any odd or even stage they will be chosen sequentially in that order. This also makes the above scheme oblivious. All the matched edges are directed from left to right. For each pair of matched vertices we exchange their pebbles if the vertex to the left has a larger pebble to that of the right. We call an odd followed by an even stage together a cycle. We shall count the number of cycles to simplify our analysis. 

\begin{lemma}
	Assuming $n_i \ge n_j$, the procedure $\mathsf{Swap}(T_i,T_{j};r)$ requires  at most $2(n_i + \max(\alpha_i,\alpha_j)p_{ij}) + O(1)$ cycles to route the pebbles to their subtrees. Where $p_{ij}$ is the number of pebbles transported from $T'_i$ to $T_j$.
\end{lemma}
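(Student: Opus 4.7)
My plan is to invoke the 0-1 principle. Since $\mathsf{Swap}(T_i,T_j;r)$ is an oblivious comparator network (the schedule of matchings and the orientation of every directed edge is fixed in advance, independent of pebble values), correctness on all inputs follows from correctness on all 0-1 inputs, and moreover the number of cycles until the correct partition is reached is maximized on some 0-1 input. So I only need to upper bound the number of cycles needed to bucket every 0 into $T_i'$ and every 1 into $T_j$.

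Fix a 0-1 input. Call a 1 sitting in $T_i$ \emph{heavy out-of-place} and a 0 sitting in $T_j$ \emph{light out-of-place}; by a pigeonhole/conservation argument there are exactly $p_{ij}$ of each. Now exploit the linear structure imposed by the grouping: the $h_i + h_j + 1$ level-groups are totally ordered, matchings always pair group $g$ with group $g\pm1$, and all comparators point from left to right. So projecting onto groups, the process looks like an odd-even transposition sort on a \emph{path} of length $L = h_i + h_j + 1$. Since $n_i \ge n_j$ we have $h_i \le n_i - 1$ and $h_j \le n_j - 1 \le n_i - 1$, so $L \le 2n_i - 1$.

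The heart of the bound is a pipelining argument with two contributions. For the \emph{propagation} term, I would follow the first heavy pebble that leaves $T_i$ and show that within $L + O(1) \le 2n_i + O(1)$ cycles it has reached its final group in $T_j$; symmetrically for the first out-of-place light pebble. This is essentially the classical path analysis and is clean because along the unique level-to-level ``trajectory'' the pebble is never blocked for more than a constant by a heavier (resp.\ lighter) rival --- any such rival is itself moving in the same correct direction. For the \emph{pipelining} term, the only source of slowdown over the raw path bound is the round-robin at non-leaf nodes: a node of arity $\alpha_i$ (resp.\ $\alpha_j$) takes its incoming edges in a fixed cyclic order, so once the pipeline fills, consecutive pebbles can exit a given bottleneck node only once per $\alpha = \max(\alpha_i,\alpha_j)$ stages, i.e.\ once per $\Theta(\alpha)$ cycles. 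Since $p_{ij}$ pebbles have to cross in each direction through the unique root bottleneck (and at most the same number through any other high-degree node on their path), this contributes at most $2\alpha p_{ij}$ additional cycles beyond the propagation of the last pebble.

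To make this rigorous I would set up a potential $\Phi_t = \sum \mathrm{dist}(p,\mathrm{target}(p))$ summed over all out-of-place pebbles, where distance is measured in level-groups. The main obstacle is proving an amortized per-cycle drop of $\Phi$ that is uniform away from the ``filling'' and ``draining'' ends of the pipe: concretely, I need to show that in every window of $2\alpha$ consecutive cycles in which $\Phi_t > 0$, either the leftmost heavy out-of-place pebble advances one group to the right or the rightmost light out-of-place pebble advances one group to the left. The key lemma needed is that at any node $v$ of arity $\alpha' \le \alpha$, within $\alpha'$ stages $v$ is matched to every one of its neighbors exactly once, so no pebble sitting at $v$ that is eligible to advance can wait more than $\alpha$ cycles. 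Combining the $L \le 2n_i$ propagation bound for the ``last'' pebble with the $\alpha p_{ij}$ amortized pipeline bound, and doubling to cover both directions of traffic through the shared root edge, yields the claimed $2(n_i + \max(\alpha_i,\alpha_j)\, p_{ij}) + O(1)$ cycles.
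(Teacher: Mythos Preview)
Your proposal has the right top-level ingredients (0-1 principle, path-style timing, accounting for round-robin at high-degree nodes), but the decomposition into ``propagation $L+O(1)$'' plus ``pipelining $2\alpha p_{ij}$'' does not hold, and you have also missed that $\mathsf{Swap}$ is defined as a \emph{two-round} procedure.

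The concrete error is the propagation claim. You argue that the first heavy pebble reaches its destination in $L+O(1)$ cycles because it is ``never blocked for more than a constant by a rival.'' But rivals are not the only obstruction: even a solitary $1$ moving toward $r$ must, at each node $u$, wait for the edge $(u,\mathrm{parent}(u))$ to come up in the parent's round-robin, which costs up to $\deg(\mathrm{parent}(u))$ cycles. Hence the first pebble's travel time along a path with arity sequence $(d_1,\dots,d_{l-1})$ is $\Theta(\sum_k d_k)$, not $\Theta(l)$. Take $T_i$ a comb (a path of length $k$ with one extra leaf hanging off each spine vertex): a $1$ at the far leaf needs roughly $2k$ cycles to reach $r$, while your $L$ is about $k$, and with $p_{ij}=1$ your additive $2\alpha p_{ij}=4$ does not close the gap. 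The projection onto level-groups does \emph{not} behave like odd-even transposition on a path, precisely because each group-to-group edge is available only once per $\Theta(\deg)$ stages rather than every stage.

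The paper handles this by running $\mathsf{Swap}$ in two explicit rounds. Round~1 applies the odd-even scheme \emph{separately} inside $T_i'$ and $T_j$ until every root-to-leaf path is monotone; a path-by-path timing diagram shows this costs at most $2\sum_k d_k$ cycles, and since the $d_k$ along any root-to-leaf path count disjoint children one gets $\sum_k d_k\le n_i$, yielding the $2n_i$ term. Only then does Round~2 push pebbles across $r$: because the $1$'s in $T_i'$ are now packed against $r$, each of the $p_{ij}$ crossing pebbles traverses at most $p_{ij}$ edges on either side, each costing at most $\max(\alpha_i,\alpha_j)$, which is where $2\alpha p_{ij}$ comes from. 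Your single-phase potential argument would have to reproduce this separation of costs, and as stated it does not: the potential you define drops by $1$ per $2\alpha$ cycles, which gives $2\alpha\Phi_0$ rather than $2n_i+2\alpha p_{ij}$.
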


\begin{proof}
	\begin{figure}[h]
		\includegraphics[width=5cm]{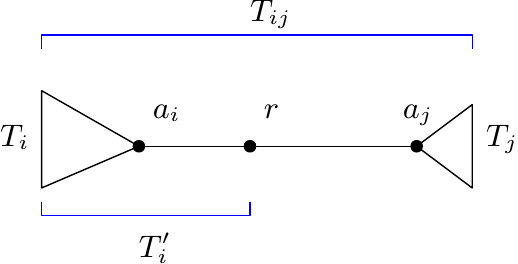}
		\centering
		\caption{Pair of subtrees joined at the root $r$.} 
	\end{figure}
	The procedure $\mathsf{Swap}(T_i,T_{j};r)$ is oblivious, thus by the 0-1 principle \cite{12} we only need to show it works correctly when the input is restricted to 0 and 1. $\mathsf{Swap}(T_i,T_{j};r)$  is broken up into two rounds. In the first round, which happens in parallel on the subtrees $T_i'$ and $T_j$, we move the larger  pebbles towards $r$ in $T_i'$ and we move the smaller pebbles towards $a_j$ in $T_j$. After completion of this round every path from $r$ to a leaf in $T_i'$ is  decreasing and for $T_j$ every path from $a_j$  to a leaf is increasing.
	\begin{figure}[h]
		\includegraphics[width=14cm]{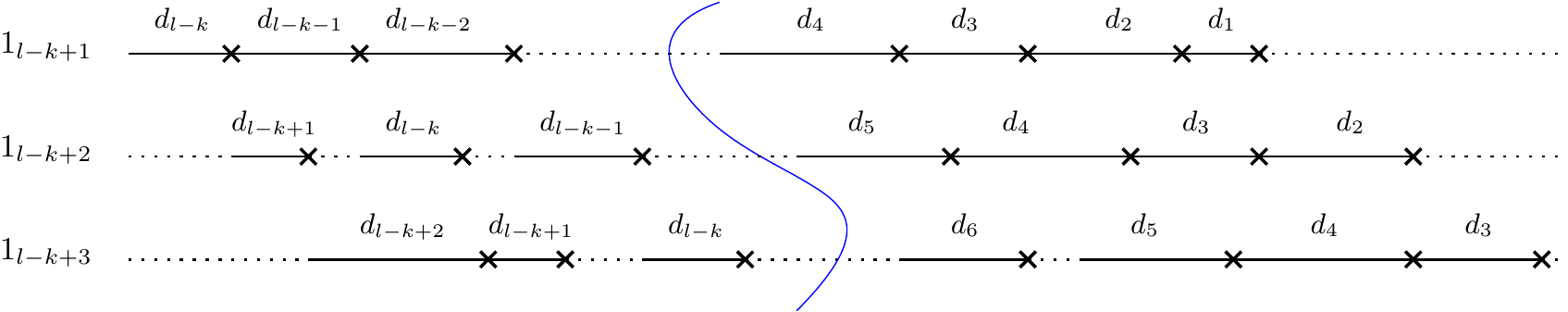}
		\centering
		\caption{Partial timing diagram for the path $P$. } 
	\end{figure}
	
	\begin{claim}
		For a tree $T$ having $n$ vertices, $2n$ cycles are sufficient to achieve this ordering.
	\end{claim}
	To prove this claim consider a path $P$ in $T$ from $r$ to some leaf node $u_l$ of $T'_i$. Let $P = (u_1,\ldots,u_l)$ where $r=u_l$. Let $D=(d_1,\ldots,d_{l-1})$ be the arities of the nodes on this path from $r$ to the leaf. We prove that for this path at most $2\sum_{i=0}^{l-1}{d_i}$ cycles suffice to sort the pebbles on it (that is all the 1's that were initially in the path are closer to the root than the 0's). Here we make two important observations that simplifies our analysis: 1) No new 0's are introduced to this path during any stage of the routing. 2) If a new 1 enters this path from a child not on this path replacing some 0, we can essentially ignore it, as this does not hurt the relative ordering of the existing 0's and 1's in the path. So we assume no new 1 is introduced in the path during the routing. This two assumptions allows us to treat the path $P$ in isolation. Since $P$ was chosen arbitrarily proving the property holds for $P$ would suffice. Now, consider the special case when $d_0=1,\ldots,d_{l-1}=1$. Then the odd-even matchings in $P$ would follow the same pattern as the Odd-Even Transposition sorting network, which takes $l \le 2\sum_{i=0}^{l-1}{d_i}$ cycles. The proof for this special case can be extended relatively easily with some additional bookkeeping. Let as assume that we have $k$ 1's initially in $P$. It is apparent that the worst case happens when all the 1's are initially at the other end of the root. Assuming the root to be the rightmost vertex in $P$. Let the 1 initially at vertex $u_i$ be labeled as $1_i$. The timing diagram in Figure 5 shows the progression of each 1 towards the root. The rightmost 1 is at vertex $u_{l-k+1}$ initially. When counting the number of cycles we make the following conservative estimate. It takes $1_{l-k+1}$ at most $d_{l-k}$ cycles to move to $u_{l-k}$. Similarly it takes  $d_{l-k}+d_{l-k+1}$ cycles for $1_{l-k+2}$ to move to $u_{l-k+1}$. Since it can take $1_{l-k+1}$ at most $d_{l-k}$ to vacate $u_{l-k+1}$ and at most $d_{l-k+1}$ additional cycles for the edge $u_{l-k+1}u_{l-k+2}$ to be in a matching afterwards. In Figure 5, the crosses represent a jump from child to its parent by a 1. The thick lines represent the cycles when the child with a 1 is waiting to be matched with its parent (which has a 0) so that it can move up. The dotted line between two thick line represents the time spent by the child idling whose parent still has a 1. It is clear from the Figure 5 that the maximum bottleneck occurs when $d_{l-1} \le d_{l-2} \ldots \le d_{1}$. Otherwise, if we have $d_i \le d_{i+1}$ for some $i$ then the child at $u_{i+3}$ have to wait $d_{i+1} + d_{i+2}$ to move to $u_{i+1}$. Since, $d_{i} \le d_{i+1}$ by the time this 1 reaches $u_{i+1}$ the 1 at $u_{i}$ would have already moved on to $u_{i-1}$. Which only helps with the routing. Figure 5 shows the timing diagram for the first three 1's starting from the right most 1 at $u_{l-k+1}$. We see that for $1_{l-k+1}$ it takes at most $\sum_{i=1}^{l-k}{d_i}$ cycles to reach the root $u_1$. Similarly for $1_{l-k+2}$ it takes at most $\sum_{i=1}^{l-k+1}{d_i} + \sum_{i=1}^{1}{d_{1+i}}$ cycles to reach $u_1$. In general for $1_{l-k+j}$ it takes at most 
	\begin{align}
	\sum_{i=1}^{l-k+j-1}{d_i} + \sum_{i=1}^{j-1}{d_{1+i}} \le 2\sum_{i=1}^{l-1}{d_i}
	\end{align}
	cycles. Clearly for any path, $\sum_{i=1}^{l-1}{d_i} \le n$ which gives us the first term in  the bound of the lemma. This proves the claim.\qed
	
	In the second round we exchange pebbles between $T_i'$ and $T_j$ such that after end of this round at least $T_i'$ is all 0's or $T_j$ is all 1's. We show this rounds takes at most $2\max(\alpha_i,\alpha_j)p_{ij} + O(1)$ cycles. This can be proven by recycling some of the main ideas from the previous proof. Again we  consider a path, but this time not restricted to the subtrees themselves but from one subtree to another. Let this path be $P = (u_{-l_L},u_{-l_L+1},\ldots,a_i=u_{-1},r=u_0,a_j=u_1,\ldots,u_{l_R})$. Let the arity sequence be analogously defined ($d_{-l_L+1},\ldots,1,1,d_1,\ldots,d_{l_R - 1}$). Note here that, the node $r$ links the two trees and contributes an arity of 1 for each side (Figure 4). Let $P = P_L | P_R$ (where `$|$' represents concatenation), where $P_L= (u_{-l},\ldots,u_0)$ and $P_R = (u_1,\ldots,u_l)$.  We call a $P_L$ with all 0's or a $P_R$ with all 1's as \textit{clean}, otherwise they are \textit{dirty}. After the completion of this round we assume without loss of generality that $T'_i$ is clean (all 0's). Hence, after completion of this round every $P_L$ will be pure. Conversely, as long as their is some impure $P_L$ their must also be some impure $P_R$. Hence if $P_R$ becomes pure before $P_L$ we can consider another path $P'=P_L | P'_R$ where $P'_R$ is still impure. Since the paths $P$ and $P'$ have consumed equal number of cycles before $P_R$ became pure, due to symmetry, we can carry our timing diagram (of pebbles in $P_L$) over to $P'$ and continue extending it. Similarly we can extend our timing diagram to other paths $P'', P''' \ldots$ etc if necessary. Hence, without loss of generality we assume $P_R$ never becomes pure before $P_L$. Also note that after crossing $u_1$, a 1 in this path may choose to move to some other subtree not in the path. Hence, from the perspective of $P$ we can imagine that the said 1 just vanishes and is replaced by a 0. This however is a desirable situation and only helps with our bound since we want $T'_i$ to be clean\footnote{This argument is not symmetric. If $T_j$ was clean after this round then we just do the same analysis from the perspective of the 0's in $T_j$.}.

	Now we look at the routing schedule on $P$. Before beginning of this round $u_i$'s have all 1's starting from $u_0$ up to some $t \ge -l_L$. We can carry over most of the observations we made to obtain the timing diagram for the first round.  However, we must take into account one additional bottleneck. During the analysis of the first round we ignored the 1's coming from other subtrees to our path $P$ as they did not hinder the invariant (all 1's precedes all 0's). However, we cannot ignore these extraneous 1's for this round because all the 1's in $P_L$ must move to $P_R$ before the end of it. At the beginning, the sentinel $1_{0}$ moves to $u_1$ immediately after 1 cycle. Then it may take it up to $d_1$ cycles to move to $u_2$. For $1_{-1}$ it takes at most 2 cycles to move to $u_0$. Actually the timing diagram for this two 1's will be exactly the same as in the first round because no other 1 from some other subtree can get in front of them. Let $l'_L=\min(l_L,p_{ij})$ and $l'_R=\min(l_R,p_{ij})$ . For the sentinel it takes at most  $\sum_{i=1}^{l'_L}d_i+1$ to reach $u_{l'_R}$ or finish at somewhere left of $u_{l'_R}$. Similarly it takes $1_{-1}$ at most $\sum_{i=1}^{l'_L}d_i + 2 + d_{l'_R-1}$ steps to reach its final position.  
	However things get interesting for $1_{-2}$. First it has to wait for its predecessor $1_{-1}$ to vacate $u_{-1}$. Once $1_{-1}$ has left $u_{-1}$ it may be the case that, an arbitrary number of 1's from other sibling subtrees may end up beating it to reach $u_{-1}$. Let us say $s_{2}$  1's reach $u_{-1}$ before $1_{-2}$ does. Then it would take $1_{-2}$ at most $\sum_{i=1}^{l'_L}d_i + 2 + \sum_{i=1}^{s_{2}+1}d_{l'_R-i}$ to reach $u_{l'_R-2-s_2}$ or end up somewhere left of it. We note that the second summation dominates the first one the further left we go. This can be generalized: if $s_{t}$ 1's have moved into $P$ in front of $1_{-t}$ (these events occur before $1_{-t}$ reaches $u_{-1}$) then it takes $1_{-t}$ at most $\sum_{i=1}^{l'_L}d_i + 2 + \sum_{i=1}^{s_{t}+1}d_{l'_R-i}$ steps to reach its final position in $P$. Since $s_t, l'_L$ and $l'_R$ are all $\le p_{ij}$ we have that after
	\begin{align}
	\sum_{i=1}^{l'}d_i + 2 + \sum_{i=1}^{s_{t}+1}d_{p-i} \le 2\max(\alpha_i,\alpha_j) p_{ij} + O(1)
	\end{align}
	steps $P_L$ would be pure. Since, $P_L$ is arbitrary we see that $T'_i$ must be all 0's as well. \qed
\end{proof}

The $\mathsf{Swap}(T_i,T_{j};r)$ procedure is called for each consecutive pairs of subtrees during one pass. We call a subtree type-1 if it contains all 1's. Similarly we define a tree to be type-0 if it contains all 0's. Next we prove the following assertion: after one pass one of the following is true. 1) The rightmost subtree $T_d$ is of type-1. 2) All the subtrees to its left are type-0. This can be easily seen from the fact that the trees are arranged from left to right with decreasing size. After the first swap between $T_1$ and $T_2$ it is obvious that either $T_1$ is of type-0 or $T_2$  is of type-1. Let the assertion hold after completion of $\mathsf{Swap}(T_i,T_{i+1};r)$. Hence, either $T_i$ is of type-1 or all the subtrees before $T_i$ are of type-0. If the former is true then after completion of $\mathsf{Swap}(T_i,T_{i+1};r)$, $T_{i+1}$ would be of type-1 since $n_i \ge n_{i+1}$. If the latter holds then after the swap all subtrees to the left of $T_{i+1}$ will be of type-0. This proves the assertion. Hence during the next pass we can ignore $T_d$ \footnote{At this stage we could start sorting $T_d$ in parallel, however, since we have wait for $T_1$ to start sorting in parallel, this does not improve the number of stages.}. In the above analysis we see that the number of cycles depends on $p_{ij}$ which is a property of the input sequence. However we can easily make $\mathsf{Swap}(T_i,T_{i+1};r)$ oblivious by observing that $p^j_{i,i+1} \le n_{i+j}$, where $p^j_{i,i+1}$ is the number of pebbles exchanged between $T'_i$ and $T_{i+1}$ during the $j^{th}$ pass. Let us consider the swap between $T'_1$ and $T_2$. During the first pass $\le n_2$ pebbles are exchanged. But during the first pass at most $n_3$ pebbles were exchanged between $T'_2$ and $T_3$ hence at most this many pebble will be exchanged during the swap operation between $T'_1$ and $T_2$ during the second pass. We see that this argument can be generalized easily. 

The total number of cycles in the $j^{th}$ pass is

\begin{align}
c_j = \sum_{i=1}^{d-j}{(2n_i+\beta_ip^j_{i,i+1})}
\end{align}

\noindent  where $\beta_i = \max(\alpha_i,\alpha_{i+1})$. So the total number of cycles during phase-1 is
\begin{align}
S(n,d; \alpha_1,\ldots,\alpha_d) &= \sum_{i=1}^{d-1}{c_j} = \sum_{j=1}^{d-1}\sum_{i=1}^{d-j}{(2n_i+\beta_ip^j_{i,i+1})} \le 2dn + \sum_{i=1}^{d-1}{\beta_i\sum_{j=1}^{d-i}{p^j_{i,i+1}}} 
\end{align}

\noindent At this point we focus on the double sum. We want to show $\sum_{i=1}^{d-1}{\beta_i\sum_{j=1}^{d-i}{p^j_{i,i+1}}}  \le c\min(\Delta^2 n,n^2)$. Here $\Delta$ is the maximum degree of $T$ and $c$ is some constant.  The total number of pebbles transported to $T_i$ after $d-1$ passes is $\le$ to the total number of nodes in the trees right of $T_i$. Hence, $\sum_{j=1}^{d-i}{p^j_{i,i+1}} \le \sum_{j=i+1}^{d}{n_j}$. We have,
\begin{align}
\nonumber \sum_{i=1}^{d-1}{\beta_i\sum_{j=1}^{d-i}{p^j_{i,i+1}}}  &\le \sum_{i=1}^{d-1}{\beta_i\sum_{j=i+1}^{d}{n_j}} \le \left(\sum_{i=1}^{d-1}{\beta_i}\right)\left(\sum_{i=2}^{d}{n_i}\right) \\&\le \left(\sum_{i=1}^{d-1}{\beta_i}\right)(n/2) \le cn\min(\Delta^2,n)
\end{align}

\noindent The last inequality follows from the fact that $d, \beta_i \le \Delta$ for all $i$ and $\sum_{i=1}^{d-1}{\beta_i} \le n$.  Putting this upper bound of $S(n,d; \alpha_1,\ldots,\alpha_d) $ in Equation 1 we get a simplified recurrence:

\begin{align}
C(n) \le C(n/2) + O(\min(\Delta^2n,n^2))
\end{align}

\noindent  This shows that $\mathsf{OddEvenTreeSort}$ requires $O(\min{(\Delta^2n,n^2)})$ stages to correctly sort any input with respect to the MP ordering. We note that for the two extreme cases, 1) when $T=P_n$ and 2) $T = K_{1,n-1}$ the number stages needed by $\mathsf{OddEvenTreeSort}$ is optimal up to a constant factor. It remains to be seen if there exists an $O(\Delta n)$ round sorting network for trees

%
%
%

%
%
%


\end{document}